\documentclass[letterpaper, 10 pt, journal,twoside]{IEEEtran}

\usepackage{amsmath,amsfonts}
\usepackage{algorithmic}
\usepackage{array}
\usepackage[caption=false,font=normalsize,labelfont=sf,textfont=sf]{subfig}
\usepackage{textcomp}
\usepackage{stfloats}
\usepackage{url}
\usepackage{verbatim}
\usepackage{graphicx}
\def\BibTeX{{\rm B\kern-.05em{\sc i\kern-.025em b}\kern-.08em
    T\kern-.1667em\lower.7ex\hbox{E}\kern-.125emX}}
\usepackage{balance}

\usepackage{multirow}
\usepackage{booktabs}

\usepackage{graphics}
\usepackage{epsfig} 
\usepackage{xcolor}
\usepackage{amssymb}  
\usepackage{amsthm}
\usepackage{euscript}
\usepackage{xcolor}
\usepackage{tasks}
\usepackage{hyperref}
\usepackage{paralist}
\usepackage{comment}
\usepackage{amsmath} 
{
\newtheorem{theorem}{\bf Theorem}
\newtheorem{lemma}{\bf Lemma}

\newtheorem{problem}{\bf Problem}

\newtheorem{assumption}{\bf Assumption}

}

\newcommand{\trans}{^{\top}}

\newcommand{\vect}[1]{\boldsymbol{#1}}
\newcommand{\bvect}[1]{\Bar{\boldsymbol{#1}}}

\begin{document}

\title{\LARGE \bf Safe and robust tube-based path-following for robot navigation}

\author{Arthur H. D. Nunes, Vinicius M. Gon\c{c}alves, Guilherme V. Raffo,\\Leonardo A. B. Torres, and Luciano C. A. Pimenta%
\thanks{The authors are with the Graduate Program in Electrial Engineering (PPGEE) of the Universidade Federal de Minas Gerais (UFMG), Belo Horizonte, MG, 31270-901, Brazil (emails: arthurhdn@ufmg.br; vinicius.marianog@gmail.com; raffo@ufmg.br; leotorres@ufmg.br; lucpim@ufmg.br). The first author is affiliated with NUAR Systems LTDA, Contagem, MG, Brazil.}
\thanks{This work was in part supported by the grants CNPq (Brazilian National Research Council) 309925/2023-1 and 301298/2025-4, and FAPEMIG (under grants APQ-02144-18 and APQ-0063023). This work was partially supported by Petrobras/ANP under grants 2023/00494-5 and 2023/00643-0.}%
}


\maketitle

\begin{abstract}
In this paper, we propose a new robust navigation framework for path following tasks in robots operating within unknown, cluttered environments. Our approach ensures reactive safety through obstacle avoidance and guaranteed convergence to a target path, while simultaneously mitigating the impact of unknown-but-bounded disturbances using a tube-based control strategy. The methodology integrates key aspects in the robot navigation: (i) a nominal Integrated Guidance and Control scheme for path-following employing Artificial Vector Fields guidance and Backstepping control; (ii) a smooth distance function that enables a continuous control law formulation for seamless obstacle avoidance; (iii) a unified control objective that balances collision avoidance with path-following; and (iv) an adaptive control component to provide robustness against external disturbances. We provide formal proofs of safety and stability using barrier functions and Lyapunov stability theory. The effectiveness of the proposed framework is validated through extensive numerical simulations.
%
\end{abstract}

\begin{IEEEkeywords}
Path-following, Robot navigation, Robust control, Safe control, Integrated Guidance and Control.
\end{IEEEkeywords}

\section{Introduction}
\IEEEPARstart{A}{utonomous} robot navigation in unknown and dynamic environments is a timely and important challenge in robotics and control \cite{DurrantWhyte1994,Siciliano2016,11003579}. This task requires controllers capable of ensuring safety, robustness, and accurate path-following. 
The solution to the autonomous navigation problem can be applied to a wide range of applications such as surveillance, inspection, search
and rescue, urban mobility, autonomous vehicles, among others \cite{8384677,8714418,miranda2022autonomous,rezende2019autonomous}. 

\subsection{Obstacle avoidance}
The presence of obstacles is an important challenge in robot navigation. 
Strategies to address this issue are broadly categorized into two types: reactive and deliberative. 

Reactive approaches typically assume an obstacle-free workspace and perform online evasive maneuvers upon detection via exteroceptive sensors. For instance, 
\cite{yao2019integrated} have presented an integrated path-following approach based on the work of \cite{gonccalves2010vector}, and reactive collision avoidance through bump functions. Similarly, \cite{pereira2021collision} have proposed a reactive scheme for fixed-wing UAVs to deviate from moving obstacles. More recently, \cite{nunes2022vector} have extended these capabilities to avoid generic-shaped dynamic obstacles in $n$-dimensional spaces.
In contrast, deliberative approaches rely on offline path planning to compute a collision-free path. While deliberative methods can prevent local minima and generate smoother paths, they generally incur a higher computational cost and require prior knowledge of the map and obstacle geometries \cite{nunes2022vector}.

In general, reactive approaches are favored for their simplicity, speed, and low computational cost in unknown environments. However, they often suffer from non-smooth control laws and potential convergence to undesirable local equilibria. The non-smoothness of the reactive approaches often arises from the gradient of the standard Euclidean distance function. To mitigate this, smooth distance functions, such as those presented in \cite{11358638}, offer an effective alternative. The recent work \cite{nunes2026safe} expanded upon the methodology in \cite{nunes2022vector} by incorporating a smooth distance function and adding key features to provide safety guarantees.

In this work, we propose a reactive avoidance framework that uses the smooth distance function from \cite{nunes2026safe}. Unlike previous implementations that act as velocity-level commands, our approach integrates the function into an acceleration-based controller. 

\subsection{Guidance laws}

To address part of the navigation problem, many works have proposed guidance strategies based on Artificial Vector Fields (AVFs)  \cite{yao2021singularity,gonccalves2010vector,gonccalves2010circulation,yao2022guiding,yao2019integrated,lawrence2008lyapunov,nelson2007vector,rezende2020robust,pereira2021collision}.
%
%
The method proposed in \cite{rezende2021constructive} has introduced an approach for time-varying path-following in $n$-dimensions. 
By using a parametric representation of the target curve, this framework offers several advantages over traditional methods, such as those in \cite{yao2021singularity,gonccalves2010vector,rezende2020robust,pereira2021collision}. A key feature of the \cite{rezende2021constructive} method is the use of the closest point on the path to compute the guidance vector field. In the present work, we adopt this method directly as a nominal guidance layer to follow the path. We use the guidance law \cite{rezende2021constructive} as a constructive part of our methodology while incorporating other control aspects, such as obstacle avoidance, acceleration control, and disturbance mitigation.

\subsection{Control laws}

While guidance and obstacle avoidance are fundamental to robot navigation, they are typically implemented as an outer loop within hierarchical, cascaded approaches \cite{pereira2021collision,rezende2020robust}. In such schemes, the guidance layer is decoupled from the inner-loop controller (SGC - Separated Guidance and Controller). In contrast, we propose an Integrated Guidance and Controller (IGC) scheme. By leveraging Backstepping, a nonlinear control technique \cite{khalil2002nonlinear}, we unify the guidance law from \cite{rezende2021constructive} directly into the controller design. 

Backstepping offers significant advantages over linear control strategies, such as PID or LQR \cite{rubi2020survey,ozbek2016feedback,bouabdallah2004pid,cowling2007prototype}, particularly in handling the inherent nonlinearities of robotic systems \cite{salierno2018whole,TAN201647,5428769,5531424,1570447}.
While it has been successfully applied to various robotic platforms \cite{salierno2018whole, TAN201647}, we extend its utility here by employing it as a comprehensive path controller that incorporates both nominal guidance and safety constraints. Building upon the preliminary IGC concepts introduced in \cite{nunes2023integrated}, the present methodology evolves the framework by integrating formal safety guarantees based on \cite{nunes2026safe} and adaptive disturbance mitigation. This results in a unified acceleration-level control law that simultaneously addresses path convergence, obstacle avoidance, and robust performance under uncertainty.


\subsection{Robustness}

Controllers designed only for nominal operating conditions often exhibit significant performance degradation when subjected to disturbances. To address this, advanced control schemes aim to integrate nominal performance with disturbance attenuation capabilities by combining nonlinear and robust control techniques. The main objective is to design controllers that guarantees asymptotic stability under nominal conditions while effectively mitigating the effects of disturbances \cite{Mayne2005, 4605311, 6639914}. While these existing strategies provide a foundation for robustness, they typically do not utilize a Backstepping framework for the integrated design.

In a recent work \cite{nunes2024robust}, a robust adaptive Backstepping attitude controller based on \cite{Zhang2021} was developed. That scheme has introduced elements to solve the trajectory tracking problem, including a parameter projection modification \cite{farrell2006adaptive} within the adaptive gains to prevent them to drift. By combining a nominal controller with an adaptive component, the system achieved robustness against uncertainties while tracking a time-varying references. 
%
%

The present work adopts the robustness idea of \cite{nunes2024robust} but extends its application to $n$-dimensional path-following and reactive obstacle avoidance.

\subsection{Contributions}

In summary, the proposed methodology extends the foundational concepts presented in\cite{nunes2023integrated,nunes2024robust,nunes2026safe}. The primary contributions of this work are as follows:
\begin{itemize}
    \item The proposition of a novel nominal acceleration control law capable of following dynamic paths and reactively avoiding generic-shaped dynamic obstacles. This formulation advances the work in \cite{nunes2023integrated} by introducing a safety framework that allows formal proofs;
    \item The extension of a nominal IGC scheme that unifies AVF with Backstepping control \cite{rezende2021constructive,nunes2023integrated}, ensuring cohesive interaction between high-level guidance and vehicle dynamics; 
    \item The adaptation of a smooth, Euclidean-like distance function to achieve a continuous acceleration control law  \cite{nunes2026safe};
    \item The development of a combined nominal and robust adaptive control law. This extends the disturbance mitigation strategy of \cite{nunes2024robust} from attitude tracking to the $n$-dimensional safe path-following problem;
    \item Formal proofs for safety, path-following, and disturbance mitigation using barrier function concepts and Lyapunov stability theory;
    \item Extensive validation in simulated scenarios.
\end{itemize}

\section{Problem statement}

Let $\vect{p} \in \mathbb{R}^n$ be the system position, $\vect{v} \in \mathbb{R}^n$ be its velocity and $\vect{a} \in \mathbb{R}^n$ be its acceleration, also assumed as the control input. The system is considered disturbed by an acceleration term $\vect{a}_d \in \mathbb{R}^n$. The following equations describe its dynamics:
\begin{align}
    \Dot{\vect{p}} & = \vect{v},\nonumber \\
    \Dot{\vect{v}} & = \vect{a} + \vect{a}_d. \label{eq:system_double_integrator}
\end{align}
This system is said to be the \textbf{practical system}.

The primary objective of this work is to control the positions of the practical system to follow a small tube around a pre-defined target curve as formalized below. 
\begin{problem}(Path-following)
    Consider the disturbed system \eqref{eq:system_double_integrator} and a given time-varying target path $\vect{\mathcal{C}}(t)$ to be followed with parametric representation $\vect{c}(s,t)$. Design a control law that determines the acceleration $\vect{a}(\vect{p},\vect{v},t)$, such that the trajectories of the system, $\vect{p}(t)$, converge to and follow a small tube with known radius $R \in \mathbb{R}$ around the target path $\vect{\mathcal{C}}(t)$ when not obstructed by obstacles.
    \label{problem:path_following2}
\end{problem}

When the path is obstructed or a collision is at risk, the controller must prioritize safety over the path following. This means that the robot is allowed to move away from the curve to accomplish the obstacle deviation and then resume the primary goal once the avoidance is finished. 

To define the safety problem to be solved, let $\vect{\mathcal{O}}(t) \subset \mathbb{R}^n$ be a forbidden set, also referred to as \emph{obstacle set}. 
Assume that this set is represented by a finite number of time-varying points \(\vect{o}_i(t) \in \vect{\mathcal{O}}(t)\), in which the number of points stays the same, \(N\), and each function \(\vect{o}_i(t)\) is differentiable in \(t\).
Let the half-squared distance between the point $\vect{p}$ and the set $\vect{\mathcal{O}}(t)$ be defined as
\begin{equation}
D_{\vect{\mathcal{O}}}(\vect{p},t) \equiv \min_{i=1,...,N}\frac{1}{2}\|\vect{p} - \vect{o}_i(t)\|^2,
\label{eq:half_squared_distance}
\end{equation}
where $\vect{o}_i(t) \in \vect{\mathcal{O}}(t)$ is a time-varying point whose position is differentiable with respect to time.
%
Then, this problem is defined as follows.
\begin{problem}(Safety)
    The robot must avoid the forbidden set by keeping at least a safe distance $\lambda>0$ from it, \emph{i.e.}
    \begin{align}
        \vect{p}(t) & \in \vect{\mathcal{S}}(t) \quad \forall t \geq 0,
        \label{eq:condition_collision_avoidance}
    \end{align}
   where $\vect{\mathcal{S}}(t)$ is a safety set defined using barrier certificates \cite{8796030}:
    \begin{align}
        \vect{\mathcal{S}}(t) = & \{ \vect{p} \in \mathbb{R}^n \ | \ B^\lambda(\vect{p},t) \geq 0\},\label{eq:safety_set}\\
        B^\lambda(\vect{p},t) & \equiv D_{\vect{\mathcal{O}}}(\vect{p},t) - \frac{\lambda^2}{2}.\label{eq:safety_set_barrier_function}
    \end{align}
    \label{problem:collision_avoidance}
\end{problem}

The compromise idea is to always solve Problem \ref{problem:collision_avoidance} and, when it is safe, solve Problem \ref{problem:path_following2}. 

To address this scenario, our solution is composed of three main parts.
\begin{itemize}
    \item First, we assume a nominal system without disturbances, denoted by a top bar $(\Bar{.})$. In this system, we propose a nominal path-following control law $\Bar{\vect{a}}_{\vect{\Phi}}$ that achieves asymptotic convergence;
    \item Then, we propose a nominal avoidance control law $\Bar{\vect{a}}_{\vect{\Psi}}$ and blend it with the previous one ensuring nominal safety and path convergence;
    \item Finally, we propose a robust adaptive control scheme combined with the previous nominal control to ensure the deviation between the practical system and the previously found nominal trajectory remains small in a tube-based approach.
\end{itemize}

\section{Nominal control}

For the first part of our method, assume that there are no disturbances and let the top bar $(\Bar{.})$ denote the \textbf{nominal system} according to
\begin{align}
    \Dot{\Bar{\vect{p}}} & = \Bar{\vect{v}},\nonumber \\
    \Dot{\Bar{\vect{v}}} & = \Bar{\vect{a}}. \label{eq:system_double_integrator_nominal}
\end{align}

\subsection{Path-following}

In order to ensure only path-following in an obstacle-free scenario for now, we adopt a Backstepping approach where we integrate the Artificial Vector Field guidance from \cite{rezende2021constructive} into the controller.

\subsubsection{Step 1}
Following this strategy, in step one we consider $\Dot{\Bar{\vect{p}}} = \Bar{\vect{v}}$, where a virtual control law is designed for $\Bar{\vect{v}}$. This virtual control law follows the AVF procedure from \cite{rezende2021constructive} as a constructive block part of our solution. 

The computation of the vector field virtual control law for $\Bar{\vect{v}}$, defined as $\vect{\Phi}(\Bar{\vect{p}},t)$, starts by finding the closest point in the curve the respective distance, and tangent vectors
\begin{align}
    s_*(\bvect{p},t) & = \arg \min_s \| \bvect{p} - \vect{c}(s,t) \|,\label{eq:s_star} \\
     \vect{c}_*(\bvect{p},t) & = \vect{c}(s_*(\bvect{p},t),t),\label{eq:c_star}\\
     \vect{D}_{\vect{\mathcal{C}}}(\bvect{p},t) & = \bvect{p} - \vect{c}_*(\bvect{p},t), \label{eq:curve_distance_vector} \\
     \vect{T}_{\vect{\mathcal{C}}}(\bvect{p},t) & = \dfrac{\partial \vect{c}(s,t)}{\partial s}\biggr\rvert_{s = s_*}.\label{eq:field_T}
\end{align}

According to the method \cite{rezende2021constructive}, we assume that the domain of the function $\vect{\Phi}$ is the set of points $(\Bar{\vect{p}},t)$ such that $s_*(\bvect{p},t)$ is a singleton. Thus, the subsequent entities are also singletons. This assumption is relevant for properly defining the vector field. In practice, the set of points where $s_*(\bvect{p},t)$ is multi-valued has measure zero. In some cases, it is also possible to tune the vector field such that those points are unapproachable. In a general case, one can simply choose one of the possible vectors.

The two vectors in \eqref{eq:curve_distance_vector} and \eqref{eq:field_T} that represent the convergence and circulation behaviors respectively are weighted by scalar gains $G(\|\vect{D}_{\vect{\mathcal{C}}}\|) = \frac{2}{\pi}\tan^{-1}(k_G\|\vect{D}_{\vect{\mathcal{C}}}\|), \ k_G > 0,$ and $H(\|\vect{D}_{\vect{\mathcal{C}}}\|) = \pm\sqrt{1 - G(\|\vect{D}_{\vect{\mathcal{C}}}\|)^2}$.

Then, we compute a static unit field and a feed-forward component to deal with time variation:
\begin{align}
        \vect{\Phi}_S & = -G\frac{\vect{D}_{\vect{\mathcal{C}}}}{\|\vect{D}_{\vect{\mathcal{C}}}\|} + H\frac{\vect{T}_{\vect{\mathcal{C}}}}{\|\vect{T}_{\vect{\mathcal{C}}}\|}, \quad
        \boldsymbol{\Phi}_T = -\Pi_T\frac{\partial \boldsymbol{D}_{\vect{\mathcal{C}}}}{\partial t},
    \label{eq:static_and_time_field}
\end{align}
where $\Pi_T$ is the null space projector of $\vect{T}_{\vect{\mathcal{C}}}$.

To weight the sum of the components defined above, another scalar gain $\eta(\bvect{p},t) \in [0, v_r]$ is defined to ensure a desired reference speed $\|\vect{\Phi}(\bvect{p},t)\| = v_r$. This gain is given by
\begin{gather}
    \eta = -\vect{\Phi}_S \cdot \vect{\Phi}_T + \sqrt{(\vect{\Phi}_S \cdot \vect{\Phi}_T)^2 + v_r^2 - \|\vect{\Phi}_T\|^2}.
\end{gather}
From \cite{rezende2021constructive}, $\eta$ is defined for curves that move slower than the system velocity, $\|\boldsymbol{\Phi}_T\|<v_r$.

Finally, the field $\vect{\Phi}(\bvect{p},t)$ is given by
\begin{equation}
    \begin{array}{rcl}
        \vect{\Phi}(\bvect{p},t) & = & \eta(\bvect{p},t)\vect{\Phi}_S(\bvect{p},t) + \vect{\Phi}_T(\bvect{p},t).
    \end{array}
    \label{eq:field}
\end{equation}

According to Proposition 2 of \cite{rezende2021constructive}, let $V_p: \mathbb{R}^n \rightarrow \mathbb{R}$ be given by 
    \begin{equation}
        V_p(\boldsymbol{D}_{\boldsymbol{\mathcal{C}}}) = \frac{1}{2} \boldsymbol{D}_{\boldsymbol{\mathcal{C}}}\trans \boldsymbol{D}_{\boldsymbol{\mathcal{C}}},\label{eq:V_p_defintion}
    \end{equation}
    with $\boldsymbol{D}_{\boldsymbol{\mathcal{C}}}$ from \eqref{eq:curve_distance_vector}.
   When $\bvect{v} = \vect{\Phi}(\bvect{p},t)$ the following result holds
    \begin{align}
    \Dot{V}_{p}(\boldsymbol{D}_{\boldsymbol{\mathcal{C}}}) & = -\eta G \|\boldsymbol{D}_{\boldsymbol{\mathcal{C}}}\|,\label{eq:Dot_V_p_result}
    \end{align}
Which implies that $\boldsymbol{D}_{\boldsymbol{\mathcal{C}}} \rightarrow 0$ so trajectories of the system will converge to the curve. As $\boldsymbol{D}_{\boldsymbol{\mathcal{C}}} \rightarrow 0$, the circulation component in \eqref{eq:static_and_time_field} becomes dominant and the curve will be followed.

\subsubsection{Step 2}
For the second step, let $\vect{z}_v \equiv \Bar{\vect{v}} - \vect{\Phi}$ be a velocity error. The augmented system is then
\begin{align}
    \Dot{\vect{D}}_{\vect{\mathcal{C}}} & = \vect{\Phi} - \Dot{\vect{c}}_* + \vect{z}_v\\
    \Dot{\vect{z}}_v & = \Bar{\vect{a}} - \Dot{\vect{\Phi}}.
\end{align}

To control this system to follow the path, the following law is proposed:
\begin{align}
    \Bar{\vect{a}}_{\vect{\Phi}} & = - k_p\vect{D}_{\vect{\mathcal{C}}} - k_v\vect{z}_v + \Dot{\vect{\Phi}}, \quad k_p,k_v > 0.
    \label{eq:control_law_acceleration_curve_following}
\end{align}
\begin{lemma}[Nominal path-following]
    For the nominal system \eqref{eq:system_double_integrator_nominal}, if applied the control law $\Bar{\vect{a}} = \Bar{\vect{a}}_{\vect{\Phi}}$ from \eqref{eq:control_law_acceleration_curve_following} the trajectories of the system converge to and follow the target curve $\vect{\mathcal{C}}(t)$.
    \label{lemma:nominal_path_following}
\end{lemma}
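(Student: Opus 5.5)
The proof will be a standard Backstepping Lyapunov argument built on the composite function
\begin{equation*}
V = k_p V_p(\vect{D}_{\vect{\mathcal{C}}}) + \tfrac{1}{2}\vect{z}_v\trans\vect{z}_v ,
\end{equation*}
with $V_p$ from \eqref{eq:V_p_defintion}. The factor $k_p$ is chosen so that the cross term produced by $\vect{z}_v$ in $\Dot{V}_p$ cancels against the term $-k_p\vect{D}_{\vect{\mathcal{C}}}$ in the control law \eqref{eq:control_law_acceleration_curve_following}.

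\textbf{Step 1: derivative of $V_p$.} Along the nominal system with $\bvect{v} = \vect{\Phi} + \vect{z}_v$, I will write $\Dot{V}_p = \vect{D}_{\vect{\mathcal{C}}}\trans \Dot{\vect{D}}_{\vect{\mathcal{C}}}$. The contribution of $\vect{\Phi}$ (together with $-\Dot{\vect{c}}_*$) is exactly the quantity already computed in Proposition 2 of \cite{rezende2021constructive}, i.e., \eqref{eq:Dot_V_p_result}. I will also use the closest-point property $\vect{D}_{\vect{\mathcal{C}}}\trans \vect{T}_{\vect{\mathcal{C}}} = 0$, so that the $\partial s_*$ terms drop out. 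This gives
\begin{equation*}
\Dot{V}_p = -\eta G\|\vect{D}_{\vect{\mathcal{C}}}\| + \vect{D}_{\vect{\mathcal{C}}}\trans\vect{z}_v .
\end{equation*}

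\textbf{Step 2: derivative of $V$.} Substituting $\Dot{\vect{z}}_v = \Bar{\vect{a}}_{\vect{\Phi}} - \Dot{\vect{\Phi}} = -k_p\vect{D}_{\vect{\mathcal{C}}} - k_v\vect{z}_v$, the cross terms $k_p\vect{D}_{\vect{\mathcal{C}}}\trans\vect{z}_v$ cancel and
\begin{equation*}
\Dot{V} = -k_p\eta G\|\vect{D}_{\vect{\mathcal{C}}}\| - k_v\|\vect{z}_v\|^2 \le 0 .
\end{equation*}
Hence $V$ is nonincreasing, and $\vect{D}_{\vect{\mathcal{C}}}$ and $\vect{z}_v$ stay bounded.

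\textbf{Step 3: convergence.} Since $\Dot{V}\le 0$ and $V\ge0$, $V$ has a limit and $\int_0^\infty(\eta G\|\vect{D}_{\vect{\mathcal{C}}}\| + \|\vect{z}_v\|^2)\,dt<\infty$. The closed loop is time-varying, so I will use Barbalat's lemma rather than LaSalle. For this I need $\Ddot{V}$ bounded, which follows from boundedness of $\vect{D}_{\vect{\mathcal{C}}}$ and $\vect{z}_v$, $\|\vect{\Phi}\|=v_r$, and bounded $\Dot{\vect{c}}_*$ on the curve class considered. This yields $\vect{z}_v\to0$ and $\eta G\|\vect{D}_{\vect{\mathcal{C}}}\|\to 0$.

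\textbf{Main obstacle: a uniform lower bound on $\eta$.} To conclude $\vect{D}_{\vect{\mathcal{C}}}\to0$ from $\eta G\|\vect{D}_{\vect{\mathcal{C}}}\|\to 0$, $\eta$ must stay bounded away from zero. Since $G$ is strictly increasing with $G(0)=0$, the product $G(\|\vect{D}_{\vect{\mathcal{C}}}\|)\|\vect{D}_{\vect{\mathcal{C}}}\|$ vanishes only when $\vect{D}_{\vect{\mathcal{C}}}=0$. I will get the bound on $\eta$ from the assumption $\|\vect{\Phi}_T\|\le v_T< v_r$: the formula gives $\eta \ge v_r - \|\vect{\Phi}_T\| \ge v_r - v_T>0$, because $\|\vect{\Phi}_S\|=1$ makes the root at least $|\vect{\Phi}_S\cdot\vect{\Phi}_T|$ plus a margin. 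Strictly, $\sqrt{a^2+b}\ge |a|+ b/(|a|+\sqrt{a^2+b})$, which gives a uniform positive bound. I will state this uniform-speed-gap assumption explicitly, together with the requirement that trajectories remain where $s_*$ is a singleton, since \eqref{eq:field} is only defined there. Then $\vect{D}_{\vect{\mathcal{C}}}\to 0$ and $\bvect{v}-\vect{\Phi}\to0$. As in the argument after \eqref{eq:Dot_V_p_result}, $G\to0$ and $|H|\to1$, so the velocity asymptotically equals the tangential circulation component plus feed-forward. The trajectory therefore converges to and follows $\vect{\mathcal{C}}(t)$.
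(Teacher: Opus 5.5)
Your proposal is correct and follows the paper's proof: the same composite function $k_pV_p+\tfrac12\vect{z}_v\trans\vect{z}_v$, the same cancellation of the cross term $k_p\vect{D}_{\vect{\mathcal{C}}}\trans\vect{z}_v$ by the $-k_p\vect{D}_{\vect{\mathcal{C}}}$ term of \eqref{eq:control_law_acceleration_curve_following}, and the same $\Dot{V}=-k_p\eta G\|\vect{D}_{\vect{\mathcal{C}}}\|-k_v\|\vect{z}_v\|^2$, while your closest-point justification of $\Dot{V}_p$ and the uniform bound $\eta\ge v_r-v_T>0$ supply details the paper leaves implicit. Given that bound you can drop Barbalat, whose $\Ddot{V}$ bound as you state it would also need $\Dot{\eta}$ (hence $\Dot{s}_*$) bounded: $V$ and the bound $\Dot{V}\le-k_p(v_r-v_T)G\|\vect{D}_{\vect{\mathcal{C}}}\|-k_v\|\vect{z}_v\|^2$ depend only on $(\vect{D}_{\vect{\mathcal{C}}},\vect{z}_v)$, so $\Dot{V}\le-\gamma(V)$ for a positive definite $\gamma$ and $V\to0$ by comparison.
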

\begin{proof}
    Let $V_{p,v}: \mathbb{R}^n \times \mathbb{R}^n \rightarrow \mathbb{R}$ be a candidate Lyapunov function given by
    \begin{align}
    V_{p,v}(\vect{D}_{\vect{\mathcal{C}}},\vect{z}_v) & = k_pV_p + \frac{1}{2}\vect{z}_v\trans\vect{z}_v, \quad k_p > 0,
\end{align}
with $V_p$ from \eqref{eq:V_p_defintion}.

Its time derivative is
\begin{align}
    \Dot{V}_{p,v} & = -k_p\eta G \|\vect{D}_{\vect{\mathcal{C}}}\| + k_p\vect{D}_{\vect{\mathcal{C}}}\trans\vect{z}_v + \vect{z}_v\trans\left(\Bar{\vect{a}} - \Dot{\vect{\Phi}}\right).
\end{align}

Applying the nominal path-following control law with  $\Bar{\vect{a}} = \Bar{\vect{a}}_{\vect{\Phi}}$ \eqref{eq:control_law_acceleration_curve_following} yields
\begin{align}
    \Dot{V}_{p,v} & = -k_p\eta G \|\vect{D}_{\vect{\mathcal{C}}}\| - k_v\|\vect{z}_v\|^2.
\end{align}
Which implies that $\boldsymbol{D}_{\boldsymbol{\mathcal{C}}} \rightarrow 0$ and $\vect{z}_v \rightarrow 0$ so trajectories of the system will converge to and follow the curve.
\end{proof}
\subsection{Collision avoidance}

For the next part of our method, we complete the nominal control strategy
by incorporating a nominal obstacle avoidance for
the nominal system \eqref{eq:system_double_integrator_nominal},

Let $\Bar{\lambda}$ be a safety threshold such that $\Bar{\lambda} > \lambda$.
The idea in this section is to ensure that
$\bvect{p}$ is at least $\Bar{\lambda}$ distant from $\vect{\mathcal{O}}$.
Later in the robust control formulation, we will show that this implies
that $\vect{p}$ will be at least $\lambda$ distant from the obstacles,
solving Problem \ref{problem:collision_avoidance}.

In order to propose the safety control law, we use a smooth half-squared distance function
$D_{\boldsymbol{\mathcal{O}}}^{h}(\boldsymbol{p},t)$, depending on a smoothing parameter $h > 0$,
given by \cite{nunes2026safe}
\begin{align}
    D_{\boldsymbol{\mathcal{O}}}^{h}(\boldsymbol{p},t)& \equiv \left(\sum_{i=1}^N
     \left(\frac{\|\boldsymbol{p} - \boldsymbol{o}_i(t)\|^2}{2}\right)^{-\frac{1}{h}}
     \right)^{-h},
    \label{eq:smooth_distance_definition}
\end{align}
with the property that, as shown in \cite{nunes2026safe},
$D_{\boldsymbol{\mathcal{O}}}^{h}(\boldsymbol{p},t) \le D_{\boldsymbol{\mathcal{O}}}(\boldsymbol{p},t)$.

Let
\begin{equation}
    B^{h,\Bar{\lambda}}(\bvect{p},t) = D_{\vect{\mathcal{O}}}^h(\bvect{p},t) - \frac{\Bar{\lambda}^2}{2},
    \label{eq:smooth_barrier}
\end{equation}
similarly to $B^{\lambda}(\vect{p},t)$ in \eqref{eq:safety_set_barrier_function},
but now using the smooth distance function \eqref{eq:smooth_distance_definition}.
Our goal for the collision avoidance control strategy is to
determine the acceleration $\Bar{\vect{a}}$ in \eqref{eq:system_double_integrator_nominal}
that will render $B^{h,\Bar{\lambda}}(\bvect{p},t) \geq 0$, $\forall t \geq 0$.
This can be achieved by imposing
the following differential inequality:
\begin{equation}
    \Ddot{B}^{h,\Bar{\lambda}} + k_1\Dot{B}^{h,\Bar{\lambda}} + k_2{B}^{h,\Bar{\lambda}} \ge 0,\label{eq:keep_distance_inequality}
\end{equation}
with $k_2 > 0$, $k_1 > 2\sqrt{k_2}$; and
initial conditions $B^{h,\Bar{\lambda}}(\bvect{p},0) \geq 0$,
and $\Dot{B}^{h,\Bar{\lambda}}(\bvect{p},\bvect{v},0) \geq r_1 B^{h,\Bar{\lambda}}(\bvect{p},0)$,
where $r_1$ is the smallest negative real root of
the Hurwitz polynomial $r^2 + k_1 r + k_2 = 0$.

From \eqref{eq:smooth_barrier}, by defining
the vector 
\begin{equation}
\vect{D}_{\vect{\mathcal{O}}}^h(\Bar{\vect{p}},t) \equiv \dfrac{\partial}{\partial \Bar{\vect{p}}} D_{\vect{\mathcal{O}}}^h(\Bar{\vect{p}},t)
\end{equation}
that points towards the direction of maximum increase in the distance from the obstacles, and
$E^h(\Bar{\vect{p}},\Bar{\vect{v}},\Bar{\vect{a}},t) = \Ddot{B}^{h,\Bar{\lambda}} + k_1\Dot{B}^{h,\Bar{\lambda}} + k_2{B}^{h,\Bar{\lambda}}$,
inequality \eqref{eq:keep_distance_inequality} can be rewritten as
\begin{align}
E^h(\Bar{\vect{p}},\Bar{\vect{v}},\Bar{\vect{a}},t) & = \vect{D}_{\vect{\mathcal{O}}}^h(\Bar{\vect{p}},t)\trans\Bar{\vect{a}} + F_1(\Bar{\vect{p}},\Bar{\vect{v}},t) \geq 0, \label{eq:E_h_defintion}\\
F_1(\Bar{\vect{p}},\Bar{\vect{v}},t) & = \Bar{\vect{v}}\trans \left(\dfrac{\partial}{\partial\Bar{\vect{p}}}\vect{D}_{\vect{\mathcal{O}}}^h(\Bar{\vect{p}},t)\right)\Bar{\vect{v}}\nonumber\\&  + 2 \dfrac{\partial}{\partial t}\vect{D}_{\vect{\mathcal{O}}}^h(\Bar{\vect{p}},t)\trans \Bar{\vect{v}} + \dfrac{\partial^2}{\partial t^2}D_{\vect{\mathcal{O}}}^h(\Bar{\vect{p}},t) \nonumber\\
& + k_1\Dot{B}^{h,\Bar{\lambda}}(\Bar{\vect{p}},\Bar{\vect{v}},t) + k_2B^{h,\Bar{\lambda}}(\Bar{\vect{p}},t).\nonumber
\end{align}

In addition to not colliding, we also want the system to circulate obstacles.
To achieve this, let $\vect{T}_{\vect{\mathcal{O}}}(\Bar{\vect{p}},t) = M\vect{D}_{\vect{\mathcal{O}}}^h(\Bar{\vect{p}},t)$
be a vector, where $M$ is a pre-defined skew-symetric matrix \cite{nunes2026safe},
such that $\vect{D}_{\vect{\mathcal{O}}}^h(\Bar{\vect{p}},t) \perp \vect{T}_{\vect{\mathcal{O}}}(\Bar{\vect{p}},t)$
and $C(\Bar{\vect{p}},\Bar{\vect{v}},t) \equiv \vect{T}_{\vect{\mathcal{O}}}(\Bar{\vect{p}},t)\trans\Bar{\vect{v}} > 0$
(the current velocity has a positive component in the direction of $\vect{T}_{\vect{\mathcal{O}}}(\Bar{\vect{p}},t)$).
To ensure the circulation condition it is sufficient to have
\begin{equation}
    \Dot{C} > 0, \label{eq:circulation_inequality}
\end{equation}
where,
\begin{align}
    \label{eq:circulation}
    \Dot{C}(\Bar{\vect{p}},\Bar{\vect{v}},\Bar{\vect{a}},t) & = \vect{T}_{\vect{\mathcal{O}}}(\Bar{\vect{p}},t)\trans\Bar{\vect{a}} + F_2(\Bar{\vect{p}},\Bar{\vect{v}},t) > 0,\\
    F_2(\Bar{\vect{p}},\Bar{\vect{v}},t) & =  \Bar{\vect{v}}\trans \left(\dfrac{\partial}{\partial\Bar{\vect{p}}}\vect{T}_{\vect{\mathcal{O}}}(\Bar{\vect{p}},t)\right)\Bar{\vect{v}} + \dfrac{\partial}{\partial t}\vect{T}_{\vect{\mathcal{O}}}(\Bar{\vect{p}},t)\trans \Bar{\vect{v}}.\nonumber
\end{align}
From the previous development,
finding the collision avoidance strategy can be framed as the search for a target acceleration
$\Bar{\vect{a}}_{\vect{\Psi}}$, such that
$\Bar{\vect{a}}_{\vect{\Psi}}  = \arg \min_{\Bar{\vect{a}}} \|\Bar{\vect{a}}\|^2$,
subjected to conditions \eqref{eq:E_h_defintion} and \eqref{eq:circulation}, respectively.
The solution of this optimization problem is then
\begin{equation}
    \Bar{\vect{a}}_{\vect{\Psi}} = \frac{-F_1}{\left\| \vect{D}_{\vect{\mathcal{O}}}^h\right\|^2}
\vect{D}_{\vect{\mathcal{O}}}^h + \left(\frac{-F_2}{\left\| \vect{T}_{\vect{\mathcal{O}}}\right\|^2} + \varepsilon\right)
\vect{T}_{\vect{\mathcal{O}}}, \quad \varepsilon > 0,
    \label{eq:control_law_acceleration_deviate_obstacle}
\end{equation}
such that
\begin{equation}
 E^h(\Bar{\vect{p}},\Bar{\vect{v}},\Bar{\vect{a}}_{\vect{\Psi}},t) = 0, \quad \Dot{C}(\Bar{\vect{p}},\Bar{\vect{v}},\Bar{\vect{a}}_{\vect{\Psi}},t) > 0.\label{eq:E_h_only_avoidance_equals_0}
\end{equation}


\subsection{Combined strategy}

Considering $\Bar{\vect{a}}_{\vect{\Phi}}$ in \eqref{eq:control_law_acceleration_curve_following}
and $\Bar{\vect{a}}_{\vect{\Psi}}$ in \eqref{eq:control_law_acceleration_deviate_obstacle},
let $\Bar{\vect{a}}$ be a blended control law
\begin{align}
\Bar{\vect{a}} &  = \Theta\Bar{\vect{a}}_{\vect{\Phi}} + (1-\Theta)\Bar{\vect{a}}_{\vect{\Psi}},
\label{eq:a_control_law_nominal}\\
\Theta &= 
\mathrm{sat}_0^1\left(k_E\vect{D}_{\vect{\mathcal{O}}}^h(\Bar{\vect{p}},t)\trans\left(\Bar{\vect{a}}_{\vect{\Phi}} - \Bar{\vect{a}}_{\vect{\Psi}}\right)\right),
\label{eq:theta_original}
\end{align}
with $k_E > 0$ and $\mathrm{sat}_a^b(x) = \min(b,\max(a,x))$ is a saturation function.

\begin{lemma}(Nominal safety)
For the nominal system \eqref{eq:system_double_integrator_nominal}, the blended control law $\bvect{a}$ in \eqref{eq:a_control_law_nominal} keeps the system at least a $\Bar{\lambda}$ distance from the obstacles.
\label{lemma:nominal_safety}
\end{lemma}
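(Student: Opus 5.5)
The plan is to show that the blended law \eqref{eq:a_control_law_nominal} always satisfies the differential inequality \eqref{eq:keep_distance_inequality}, that is, $E^h(\bvect{p},\bvect{v},\bvect{a},t)\ge 0$ along the closed loop. A second-order comparison argument then gives $B^{h,\Bar{\lambda}}\ge 0$. The final step uses $D^h_{\vect{\mathcal{O}}}\le D_{\vect{\mathcal{O}}}$ to transfer this to the true distance.

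\textbf{Step 1: $E^h$ is affine in $\bvect{a}$.} From \eqref{eq:E_h_defintion},
\begin{equation*}
E^h(\bvect{a}) = \Theta E^h(\bvect{a}_{\vect{\Phi}}) + (1-\Theta)E^h(\bvect{a}_{\vect{\Psi}}).
\end{equation*}
By \eqref{eq:E_h_only_avoidance_equals_0}, the second term vanishes. Moreover,
\begin{equation*}
E^h(\bvect{a}_{\vect{\Phi}}) = E^h(\bvect{a}_{\vect{\Phi}}) - E^h(\bvect{a}_{\vect{\Psi}}) = \vect{D}^h_{\vect{\mathcal{O}}}{}\trans(\bvect{a}_{\vect{\Phi}}-\bvect{a}_{\vect{\Psi}}) \equiv \Delta,
\end{equation*}
so $E^h(\bvect{a}) = \Theta\,\Delta$. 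The saturation in \eqref{eq:theta_original} gives $\Theta=\mathrm{sat}_0^1(k_E\Delta)$. If $\Delta\le 0$, then $\Theta=0$ and $E^h=0$. If $\Delta>0$, then $\Theta\in(0,1]$ and $E^h=\Theta\Delta>0$. Hence $E^h\ge 0$ holds identically. The circulation condition is not needed for safety.

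\textbf{Step 2: comparison lemma.} Write $B\equiv B^{h,\Bar{\lambda}}(\bvect{p}(t),t)$, so that $\Ddot B+k_1\Dot B+k_2B = w(t)\ge 0$. Let $r_1,r_2$ be the (real, distinct, negative) roots of $r^2+k_1r+k_2$; they are real because $k_1>2\sqrt{k_2}$. Take $r_1$ as the more negative root, as in the statement, and let $r_2$ be the other. Define $y=\Dot B - r_1 B$. Then $\Dot y - r_2 y = w\ge 0$, so
\begin{equation*}
y(t)\ge e^{r_2 t}y(0)\ge 0,
\end{equation*}
using the initial condition $\Dot B(0)\ge r_1B(0)$. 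Next, $\Dot B - r_1B\ge 0$ gives $B(t)\ge e^{r_1t}B(0)\ge 0$, since $B(0)\ge0$.

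I need to check the factorization convention. The factorization works with either ordering of the roots, but the initial-condition hypothesis uses $r_1$ specifically, so $y$ must be defined with that same $r_1$. With that choice the argument goes through.

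\textbf{Step 3: transfer to the true distance.} Steps 1 and 2 give $D^h_{\vect{\mathcal{O}}}(\bvect{p},t)\ge\Bar{\lambda}^2/2$. Since $D_{\vect{\mathcal{O}}}\ge D^h_{\vect{\mathcal{O}}}$, the true half-squared distance satisfies $D_{\vect{\mathcal{O}}}(\bvect{p},t)\ge \Bar{\lambda}^2/2$, i.e.\ $\min_i\|\bvect{p}-\vect{o}_i\|\ge\Bar{\lambda}$.

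\textbf{Expected obstacle: well-posedness.} The main difficulty is well-posedness rather than the algebra. $\bvect{a}_{\vect{\Psi}}$ divides by $\|\vect{D}^h_{\vect{\mathcal{O}}}\|^2$ and $\|\vect{T}_{\vect{\mathcal{O}}}\|^2$, so the argument needs $\vect{D}^h_{\vect{\mathcal{O}}}\neq 0$ along trajectories. It also needs the closed loop to have solutions on which $B$ is $C^2$; note that $\Theta$ is only continuous, which suffices because $\Ddot B$ is then continuous. I would handle this by assuming trajectories avoid the critical points of $D^h_{\vect{\mathcal{O}}}$, as in \cite{nunes2026safe}, and by noting that $E^h\ge0$ holds wherever the law is defined. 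If needed, I would argue that near such points $\Theta$ can be taken as $1$ whenever $\Delta$ is large, so the singular term is irrelevant.
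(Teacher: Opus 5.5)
Your proposal is correct and follows the paper's proof. You show $E^h(\Bar{\vect{a}})=\Theta\Delta\ge 0$ from $E^h(\Bar{\vect{a}}_{\vect{\Psi}})=0$ and the fact that the saturation forces $\Theta=0$ whenever $\Delta\le 0$; the paper reaches the same conclusion by splitting into $\Theta=0$, $\Theta=1$, and intermediate $\Theta$ where $E^h=k_E\Delta^2$. You then obtain $B^{h,\Bar{\lambda}}\ge 0$ and transfer it via $D_{\vect{\mathcal{O}}}\ge D^h_{\vect{\mathcal{O}}}$, exactly as the paper does. Your factorization with $y=\Dot{B}-r_1B$ explicitly proves the second-order barrier comparison that the paper only cites from its barrier-function framework. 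Your well-posedness caveats address points the paper leaves implicit.
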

\begin{proof}
Applying the control law $\Bar{\vect{a}}$ \eqref{eq:a_control_law_nominal} in the barrier certificate $E^h$ \eqref{eq:E_h_defintion} yields
\begin{align}
E^h(\Bar{\vect{p}},\Bar{\vect{v}},\Bar{\vect{a}},t) & = \Theta (\vect{D}_{\vect{\mathcal{O}}}^h(\Bar{\vect{p}},t)\trans \Bar{\vect{a}}_{\vect{\Phi}}) -\Theta(\vect{D}_{\vect{\mathcal{O}}}^h(\Bar{\vect{p}},t)\trans\Bar{\vect{a}}_{\vect{\Psi}}) \nonumber\\
& + \vect{D}_{\vect{\mathcal{O}}}^h(\Bar{\vect{p}},t)\trans \Bar{\vect{a}}_{\vect{\Psi}} + F_1(\Bar{\vect{p}},\Bar{\vect{v}},t).\label{eq:E_h_expanded_control_law}
\end{align}

Inspecting \eqref{eq:theta_original}, we conclude that if $\Theta = 0$, then $\Bar{\vect{a}} = \Bar{\vect{a}}_{\vect{\Psi}}$ and $E^h(\Bar{\vect{p}},\Bar{\vect{v}},\Bar{\vect{a}},t) = 0$
as ensured by the result \eqref{eq:E_h_only_avoidance_equals_0}. If $\Theta = 1$, then $\Bar{\vect{a}} = \Bar{\vect{a}}_{\vect{\Phi}}$ and $\vect{D}_{\vect{\mathcal{O}}}^h(\Bar{\vect{p}},t)\trans\left(\Bar{\vect{a}}_{\vect{\Phi}} - \Bar{\vect{a}}_{\vect{\Psi}}\right) = E^h(\Bar{\vect{p}},\Bar{\vect{v}},\Bar{\vect{a}},t) \ge \frac{1}{k_E}$ as ensured by the value of $\Theta$ itself. For values of $\Theta$ between zero and one, we must analyze further.

The term $\vect{D}_{\vect{\mathcal{O}}}^h(\Bar{\vect{p}},t)\trans\Bar{\vect{a}}_{\vect{\Psi}} + F_1(\Bar{\vect{p}},\Bar{\vect{v}},t)$ is null according to \eqref{eq:E_h_only_avoidance_equals_0}. 
Therefore
%
\begin{equation}
    E^h = k_E\left(\vect{D}_{\vect{\mathcal{O}}}^h(\Bar{\vect{p}},t)\trans \Bar{\vect{a}}_{\vect{\Phi}} - \vect{D}_{\vect{\mathcal{O}}}^h(\Bar{\vect{p}},t)\trans\Bar{\vect{a}}_{\vect{\Psi}}\right)^2 \ge 0.
\end{equation}
Since $E^h \ge 0$, the system stays at the set defined by $B^{h,\Bar{\lambda}}(\Bar{\vect{p}},t)\ge 0$, according to the barrier function framework \cite{8796030}.

According to Lemma 2 of \cite{nunes2026safe}, if the smoothed barrier function holds $B^{h,\Bar{\lambda}}(\Bar{\vect{p}},t)\ge 0$, then the standard barrier function also holds because $B^{\Bar{\lambda}}(\Bar{\vect{p}},t)\ge B^{h,\Bar{\lambda}}(\Bar{\vect{p}},t) \ge 0$, since $D_{\vect{\mathcal{O}}}(\bvect{p},t) \geq D_{\vect{\mathcal{O}}}^{h}(\bvect{p},t)$. This implies that the nominal system will keep at least a safety distance $\Bar{\lambda}$ from the obstacle set.
\end{proof}

\section{Robust adaptive control}

For the last part of our methodology, we propose a robust adaptive Backstepping control law to stabilize the error around zero between the practical system \eqref{eq:system_double_integrator} and the nominal system \eqref{eq:system_double_integrator_nominal}.

The \textbf{error system} is a representation of the difference between those systems.

Let $\vect{e}_p = \vect{p} - \Bar{\vect{p}}$ be the error between the practical and nominal positions. Let $\vect{v}_e = \vect{v} - \Bar{\vect{v}}$ be an auxiliary variable that represents the error between the practical and nominal velocities such that
\begin{align}
    \Dot{\vect{e}}_p & = \vect{v}_e.
    \label{eq:p_error_system_first_order}
\end{align}

\subsection{Step 1}

We want $\vect{v}_e$ to follow a virtual control law $\vect{\mathcal{F}}_e$ such that $\vect{e}_p \rightarrow 0$ if $\vect{v}_e = \vect{\mathcal{F}}_e$. This virtual control law is chosen as 
\begin{align}
    \vect{\mathcal{F}}_e = -k_{\mathcal{F}e}\vect{e}_p, \quad k_{\mathcal{F}e} > 0.
    \label{eq:virtual_control_law_F_e}
\end{align}

Applying the virtual control law \eqref{eq:virtual_control_law_F_e} in the system \eqref{eq:p_error_system_first_order}, the point $\vect{e}_p = 0$ becomes an asymptotic stable equilibrium point as $\Dot{\vect{e}}_p =  -k_{\mathcal{F}e}\vect{e}_p$.



\subsection{Step 2}

Let $\vect{e}_v = \vect{v}_e - \vect{\mathcal{F}}_e$ be another velocity error that encompasses the virtual control law $\vect{\mathcal{F}}_e$.
The error system is represented by
\begin{align}
    \Dot{\vect{e}}_p & = \vect{\mathcal{F}}_e + \vect{e}_v,\label{eq:pos_error_system1}\\
    \Dot{\vect{e}}_v & = \vect{a}_e + \vect{a}_d - \Dot{\vect{\mathcal{F}}}_e,
    \label{eq:pos_error_system2}
\end{align}
where $\vect{a}_e \equiv \vect{a} - \Bar{\vect{a}}$ is its control input.

In order to state the robust adaptive control law, consider the following assumption.

\begin{assumption}[Norm bounded disturbances]
Let
   \begin{align}
    \chi_p & =  \vect{a}_d - \Dot{\vect{\mathcal{F}}}_e
    \label{eq:lumped_term_position}
\end{align}
be a lumped term that contains the disturbance \cite{Zhang2021}. Using an indirect method \cite{5699388,Cai2008}, the controller tackles $\chi_p$ by focusing only on its bound. Therefore, we assume that exists unknown constants $\zeta_1$ and $\zeta_2$ such that 
\begin{equation}
    \|\chi_p\| \leq \zeta_1 + \zeta_2\|\vect{e}_v\|,
\end{equation} 
and those unknown constants $\zeta \equiv \begin{bmatrix} \zeta_1 & \zeta_2\end{bmatrix}\trans$ belong to a convex set $\Xi$, \emph{i.e.}
\begin{equation}
   \zeta \in \Xi, \quad \Xi = \{\xi \in \mathbb{R}^2 \ | \ \alpha(\xi) \leq 0\},
\end{equation}
where $\alpha: \mathbb{R}^2 \rightarrow \mathbb{R}$ is a smooth convex function.
\label{assumption:norm_bounded_disturnaces_p}
\end{assumption}

Those constants that bound $\chi_p$ are estimated in the controller by adaptive gains $\Hat{\zeta}_1$ and $\Hat{\zeta}_2$ \cite{Zhang2021}. 

Let $\vect{F}: \mathbb{R}^n \rightarrow \mathbb{R}^n$ be the element-wise version of a function $F: \mathbb{R} \rightarrow \mathbb{R}$ such that: $\lim_{x\rightarrow\infty}F(x) = 1$; $\lim_{x\rightarrow-\infty}F(x) = -1$; and $F(0) = 0$.

Then, the robust adaptive control law for the error system is given by
\begin{align}
    \vect{a}_e = -k_{ep}\vect{e}_p - k_{ev}\vect{e}_v - \Hat{\zeta}_1\vect{F}(\Hat{\zeta}_1\vect{e}_v/\epsilon_p) - \Hat{\zeta}_2\vect{e}_v,
    \label{eq:control_law_f_e}
\end{align}
where $k_{ep},k_{ev},\epsilon_p>0$ are scalar gains. For notation simplification, assume that $\alpha \equiv \alpha(\Hat{\zeta})$. The adaptive parameters have the following update laws, where a projection modification was applied \cite{farrell2006adaptive},
\begin{align}
    \Dot{\Hat{\zeta}} & = \begin{cases} P\epsilon  & \text{if } \alpha < 0 \text{ or } \nabla\alpha\trans P\epsilon\leq 0 ,\\
    P\epsilon - P\frac{\nabla\alpha\nabla\alpha\trans }{\nabla\alpha\trans P\nabla\alpha}P\epsilon  & \text{otherwise}
    \end{cases}\nonumber
    \\
    P & = \begin{bmatrix}
        \rho_1 & 0 \\ 0 & \rho_2
    \end{bmatrix} > 0, \quad \epsilon = \begin{bmatrix}
        \|\boldsymbol{e}_v\| \\ \|\boldsymbol{e}_v\|^2
    \end{bmatrix}.\label{eq:pos_zetas_law}
\end{align}

\begin{lemma}
    Considering Assumption \ref{assumption:norm_bounded_disturnaces_p}, let $V_{epv}$ be the following control Lyapunov function
    \begin{align}
        V_{epv} & = k_{ep}V_{ep} + \dfrac{1}{2}\vect{e}_v \cdot \vect{e}_v + \dfrac{\Tilde{\zeta}_1^2}{2\rho_1} + \dfrac{\Tilde{\zeta}_2^2}{2\rho_2}, \label{eq:Vepv}
    \end{align}
    where 
    \begin{equation}
             \Tilde{\zeta} \equiv \begin{bmatrix}
                 \Tilde{\zeta}_1 & \Tilde{\zeta}_2
             \end{bmatrix}\trans  = \zeta - \Hat{\zeta}
    \end{equation}
    are estimation errors. By applying \eqref{eq:control_law_f_e} in the system \eqref{eq:pos_error_system1}-\eqref{eq:pos_error_system2}, $\Dot{V}_{epv} < 0$ if the system is outside the attractive set $\Delta_p$ given by
    \begin{align}
        \Delta_p & = \{(\vect{e}_p,\vect{e}_v) \ | \ \|\vect{e}_p\| \leq \delta_p, \|\vect{e}_v\| \leq \delta_v \},\nonumber\\
        \delta_p & = \sqrt{\dfrac{n\kappa\epsilon_p}{k_{ep}k_{\mathcal{F}e}}},\
        \delta_v = \sqrt{\dfrac{n\kappa\epsilon_p}{k_{ev}}},\label{eq:pos_tube}
    \end{align}
    with $n$ the problem dimension $\mathbb{R}^n$ and 
    \begin{align}
        \kappa & = \max_x |x| - xF(x), \quad x \in \mathbb{R}.
    \end{align}
    \label{lemma:pos_tube_size}
\end{lemma}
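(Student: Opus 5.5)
Throughout, I take $V_{ep} = \frac{1}{2}\vect{e}_p\cdot\vect{e}_p$, by analogy with $V_p$ in \eqref{eq:V_p_defintion}. The plan is a standard adaptive Backstepping Lyapunov computation. I will differentiate $V_{epv}$ along \eqref{eq:pos_error_system1}--\eqref{eq:pos_error_system2} with the input \eqref{eq:control_law_f_e}. Since $\zeta$ is constant, $\dot{\Tilde{\zeta}} = -\dot{\Hat{\zeta}}$, so
\begin{align*}
\dot V_{epv} = {}& -k_{ep}k_{\mathcal{F}e}\|\vect{e}_p\|^2 + k_{ep}\vect{e}_p\cdot\vect{e}_v + \vect{e}_v\cdot\big(\vect{a}_e + \chi_p\big) \\
& - \tfrac{\Tilde{\zeta}_1\dot{\Hat{\zeta}}_1}{\rho_1} - \tfrac{\Tilde{\zeta}_2\dot{\Hat{\zeta}}_2}{\rho_2}.
\end{align*}
Substituting \eqref{eq:control_law_f_e}, the cross term $k_{ep}\vect{e}_p\cdot\vect{e}_v$ cancels against $-k_{ep}\vect{e}_p\cdot\vect{e}_v$. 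This is exactly why $k_{ep}$ appears in both the control law and $V_{epv}$. What remains is
\begin{align*}
\dot V_{epv} = {}& -k_{ep}k_{\mathcal{F}e}\|\vect{e}_p\|^2 - k_{ev}\|\vect{e}_v\|^2 - \Hat{\zeta}_2\|\vect{e}_v\|^2 \\
& - \Hat{\zeta}_1\vect{e}_v\cdot\vect{F}(\Hat{\zeta}_1\vect{e}_v/\epsilon_p) + \vect{e}_v\cdot\chi_p - \Tilde{\zeta}\trans P^{-1}\dot{\Hat{\zeta}}.
\end{align*}

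Next I bound the disturbance term using Assumption \ref{assumption:norm_bounded_disturnaces_p}: $\vect{e}_v\cdot\chi_p \le \zeta_1\|\vect{e}_v\| + \zeta_2\|\vect{e}_v\|^2$. Writing $\zeta_i = \Hat{\zeta}_i + \Tilde{\zeta}_i$, the $\Hat{\zeta}_2\|\vect{e}_v\|^2$ terms cancel, and the remaining estimation-error terms are $\Tilde{\zeta}\trans\epsilon$ with $\epsilon$ as in \eqref{eq:pos_zetas_law}. In the unprojected case $\dot{\Hat{\zeta}} = P\epsilon$, these cancel exactly with $-\Tilde{\zeta}\trans P^{-1}\dot{\Hat{\zeta}}$. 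In the projected case I will invoke the standard projection property \cite{farrell2006adaptive}: since $\zeta\in\Xi$ and $\alpha$ is convex, on the boundary $\alpha(\Hat{\zeta})\ge0$ we have $\Tilde{\zeta}\trans\nabla\alpha \le 0$ by convexity, and the projection is active only when $\nabla\alpha\trans P\epsilon>0$. Together these give $-\Tilde{\zeta}\trans P^{-1}\dot{\Hat{\zeta}} \le -\Tilde{\zeta}\trans\epsilon$. This sign argument is the step needing most care, and I expect it to be the main obstacle in writing the proof rigorously.

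The leftover term is $\Hat{\zeta}_1\|\vect{e}_v\| - \Hat{\zeta}_1\vect{e}_v\cdot\vect{F}(\Hat{\zeta}_1\vect{e}_v/\epsilon_p)$. I bound it componentwise by $\sum_i \big(|\Hat{\zeta}_1 e_{v,i}| - \Hat{\zeta}_1 e_{v,i}F(\Hat{\zeta}_1 e_{v,i}/\epsilon_p)\big)$, using $\|\vect{e}_v\|\le\|\vect{e}_v\|_1$ and assuming $\Hat{\zeta}_1\ge0$, which I will ensure through the choice of $\Xi$ or positive initialization. Setting $x_i = \Hat{\zeta}_1 e_{v,i}/\epsilon_p$, each summand equals $\epsilon_p(|x_i| - x_iF(x_i)) \le \epsilon_p\kappa$. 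Here $\kappa$ is finite because $F$ saturates to $\pm1$, although this needs some regularity of $F$, which I will assume. Hence
\[
\dot V_{epv} \le -k_{ep}k_{\mathcal{F}e}\|\vect{e}_p\|^2 - k_{ev}\|\vect{e}_v\|^2 + n\kappa\epsilon_p.
\]

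To conclude, note that $\dot V_{epv}<0$ whenever $k_{ep}k_{\mathcal{F}e}\|\vect{e}_p\|^2 > n\kappa\epsilon_p$ or $k_{ev}\|\vect{e}_v\|^2 > n\kappa\epsilon_p$, since the other quadratic term is nonpositive. Negating this condition gives precisely $\|\vect{e}_p\|\le\delta_p$ and $\|\vect{e}_v\|\le\delta_v$, i.e. the set $\Delta_p$ in \eqref{eq:pos_tube}.
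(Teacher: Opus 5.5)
Your proposal is correct and follows essentially the same route as the paper: differentiate $V_{epv}$ with $V_{ep}=\frac{1}{2}\|\vect{e}_p\|^2$ (which the paper uses implicitly), cancel the cross term and the $\Tilde{\zeta}\trans\epsilon$ terms through the unprojected update law, bound the residual componentwise by $n\kappa\epsilon_p$, and show by convexity of $\alpha$ that the projection's extra term $(\Tilde{\zeta}\trans\nabla\alpha)(\nabla\alpha\trans P\epsilon)/(\nabla\alpha\trans P\nabla\alpha)$ is nonpositive. Your componentwise step is stated more cleanly than the paper's, and your assumption $\Hat{\zeta}_1\ge0$ is unnecessary, because $\Hat{\zeta}_1\|\vect{e}_v\|\le|\Hat{\zeta}_1|\sum_i|e_{v,i}|$ holds for either sign of $\Hat{\zeta}_1$.
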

\begin{proof}
    The time derivative of $V_{epv}$ \eqref{eq:Vepv} is given by
\begin{align}
     \Dot{V}_{epv} & = -k_{ep}k_{\mathcal{F}e} \|\vect{e}_p\|^2 + k_{ep}\vect{e}_p\cdot\vect{e}_v  \\
     & + \vect{e}_v\cdot(\vect{a}_d - \Dot{\vect{\mathcal{F}}}_e + \vect{a}_e) - \Tilde{\zeta_1}\Dot{\Hat{\zeta}}_1/\rho_1  - \Tilde{\zeta_2}\Dot{\Hat{\zeta}}_2/\rho_2.
    \label{eq:dot_Vepv}
\end{align}

For the sake of simplicity, consider for now that the first conditions for $\Dot{\Hat{\zeta}}$ law in \eqref{eq:pos_zetas_law} are always met. Later we show that the projection modification \cite{farrell2006adaptive} does not impact on the invariant set for the error system. Then, substituting \eqref{eq:control_law_f_e} and \eqref{eq:pos_zetas_law} in \eqref{eq:dot_Vepv} yields
\begin{align}     
    \Dot{V}_{epv} & = -k_{ep}k_{\mathcal{F}e} \|\boldsymbol{e}_p\|^2 -k_{ev} \|\boldsymbol{e}_v\|^2 + \boldsymbol{e}_v \cdot \chi_p\nonumber\\
    & - \boldsymbol{e}_v\cdot\Hat{\zeta}_1\vect{F}(\Hat{\zeta}_1\boldsymbol{e}_v/\varepsilon_p) - \Hat{\zeta}_2\|\boldsymbol{e}_v\|^2
    - \Tilde{\zeta_1}\|\boldsymbol{e}_v\| - \Tilde{\zeta_2}\|\boldsymbol{e}_v\|^2.
\nonumber
\end{align}
Considering Assumption \ref{assumption:norm_bounded_disturnaces_p} leads to
\begin{align}
    \Dot{V}_{epv} & \leq -k_{ep}k_{\mathcal{F}e} \|\boldsymbol{e}_p\|^2 -k_{ev} \|\boldsymbol{e}_v\|^2\nonumber\\
    &- \Hat{\zeta}_1\|\boldsymbol{e}_v\|\vect{F}(\Hat{\zeta}_1\|\boldsymbol{e}_v\|\varepsilon_p) + \Hat{\zeta}_1\|\boldsymbol{e}_v\|.\nonumber 
\end{align}
Using the following properties
\begin{align}
    \|\boldsymbol{q}\| & \leq \sum_{i=1}^n|q_i|, \forall \boldsymbol{q} \in \mathbb{R}^n,\label{eq:property_vector_norm}\\
    0 & \leq |q| - qF(q/\varepsilon_p) \leq \kappa\varepsilon_p, \forall q \in \mathbb{R},\label{eq:property_tanh_bound}
\end{align}
yields:
\begin{align}
    \Dot{V}_{epv} \leq -k_{ep}k_{\mathcal{F}e}\|\boldsymbol{e}_p\|^2 - k_{ev}\|\boldsymbol{e}_v\|^2 + n\kappa \varepsilon_p.
    \label{eq:lyapunov_robust_upper_bound}
\end{align}
in which $\Dot{V}_{epv} < 0$ for $\|\boldsymbol{e}_p\| > \sqrt{\frac{n\kappa\varepsilon_p}{k_{ep}k_{\mathcal{F}e}}}$ or $\|\boldsymbol{e}_v\| > \sqrt{\frac{n\kappa\varepsilon_p}{k_{ev}}}$.

When the projection modification is active, \emph{i.e.} the second conditions for $\Dot{\Hat{\zeta}}$ law in \eqref{eq:pos_zetas_law} are met, there is an extra term in the Lyapunov equation derivative $\Dot{V}_{epv}$ given by
\begin{align}
    \Tilde{\zeta}\trans \frac{\nabla\alpha\nabla\alpha\trans }{\nabla\alpha\trans P\nabla\alpha}P\epsilon.
    \label{eq:extra_term_projection_modification}
\end{align}
By definition, $\nabla\alpha(\Hat{\zeta})\trans P\epsilon > 0$ and $\Tilde{\zeta}\trans \nabla\alpha(\Hat{\zeta}) = (\zeta - \Hat{\zeta})\trans \nabla\alpha(\Hat{\zeta}) \leq \alpha(\zeta) - \alpha(\Hat{\zeta})\leq0$ since $\Xi$ is convex and $\zeta \in \Xi$ \cite{farrell2006adaptive}. Therefore, the extra term \eqref{eq:extra_term_projection_modification} does not make \eqref{eq:lyapunov_robust_upper_bound} more positive and ensures that the adaptive gains remain in $\Xi$ \cite{farrell2006adaptive}. 
\end{proof}

The former inequality means that $\delta_p = \sqrt{\frac{n\kappa\epsilon_p}{k_{ep}k_{\mathcal{F}e}}}$ bounds an attractive set for the position error $\vect{e}_p$.

From the latter inequality and using the triangle inequality, $\|\vect{e}_v\| \geq \|\vect{v}_e\| - \|\vect{\mathcal{F}}_e\|$, with $\|\vect{\mathcal{F}}_e\| = k_{\mathcal{F}e}\|\vect{e}_p\| \leq k_{\mathcal{F}e}\delta_p$ and finally $\delta_{ve} = \delta_v + k_{\mathcal{F}e}\delta_p$ bounds an attractive set for the velocity error $\vect{v}_e$, with $\delta_p$ and $\delta_v$ from \eqref{eq:pos_tube}.

Now we are able to present the complete controller scheme, which is given by
\begin{equation}
  \vect{a} = \Bar{\vect{a}}+\vect{a}_e,
  \label{eq:f_complete}
\end{equation}
with $\Bar{\vect{a}}$ from \eqref{eq:a_control_law_nominal}, and with $\vect{a}_e$ from \eqref{eq:control_law_f_e}. The proposed method yields the following result:
\begin{theorem}[Controller]
    Applying \eqref{eq:f_complete} in the practical system \eqref{eq:system_double_integrator}, its trajectories: (i) respects the safety distance if $\Bar{\lambda} > \lambda + \delta_p$ and; (ii) when it is safe, asymptotically converge to a control attractive set around the target curve $\vect{\mathcal{C}}(t)$, in which the set is \eqref{eq:pos_tube}.
    \label{theorem:position_controller}
\end{theorem}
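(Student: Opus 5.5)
The plan is to decompose the practical trajectory as $\vect{p}(t) = \bvect{p}(t) + \vect{e}_p(t)$. Here $\bvect{p}$ is the state of the nominal system \eqref{eq:system_double_integrator_nominal} driven by $\Bar{\vect{a}}$ from \eqref{eq:a_control_law_nominal}, and $\vect{e}_p$ is governed by the error system \eqref{eq:pos_error_system1}--\eqref{eq:pos_error_system2}. The first step is to check this decomposition. Substituting $\vect{a} = \Bar{\vect{a}} + \vect{a}_e$ from \eqref{eq:f_complete} into \eqref{eq:system_double_integrator} gives $\Dot{\vect{v}} - \Dot{\Bar{\vect{v}}} = \vect{a}_e + \vect{a}_d$. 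This is exactly \eqref{eq:pos_error_system2} once $\vect{e}_v = \vect{v}_e - \vect{\mathcal{F}}_e$ is used. Hence the nominal and error subsystems evolve independently and both earlier lemmas apply unchanged. I will initialize the nominal system at the practical initial state, so $\vect{e}_p(0) = 0$ and $\vect{v}_e(0) = 0$; this places the error initially inside $\Delta_p$.

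For part (i), Lemma \ref{lemma:nominal_safety} gives $\|\bvect{p}(t) - \vect{o}_i(t)\| \geq \Bar{\lambda}$ for all $i$ and all $t \geq 0$. Lemma \ref{lemma:pos_tube_size} shows that $\Dot{V}_{epv} < 0$ outside $\Delta_p$, and I will turn this into the invariance bound $\|\vect{e}_p(t)\| \leq \delta_p$ for all $t$. The triangle inequality then gives
\begin{equation*}
\|\vect{p}(t) - \vect{o}_i(t)\| \geq \|\bvect{p}(t) - \vect{o}_i(t)\| - \|\vect{e}_p(t)\| \geq \Bar{\lambda} - \delta_p > \lambda .
\end{equation*}
Taking the minimum over $i$ yields $B^{\lambda}(\vect{p},t) \geq 0$, which solves Problem \ref{problem:collision_avoidance}.

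For part (ii), when no obstacle constraint is active we have $\Theta = 1$, so $\Bar{\vect{a}} = \Bar{\vect{a}}_{\vect{\Phi}}$. Lemma \ref{lemma:nominal_path_following} then gives $\vect{D}_{\vect{\mathcal{C}}}(\bvect{p},t) \to 0$ and $\bvect{v} - \vect{\Phi} \to 0$. Since $\|\vect{p} - \vect{c}_*(\vect{p},t)\| \leq \|\vect{p} - \vect{c}_*(\bvect{p},t)\| \leq \|\vect{D}_{\vect{\mathcal{C}}}(\bvect{p},t)\| + \|\vect{e}_p\|$, the practical position asymptotically enters the tube of radius $\delta_p$ about $\vect{\mathcal{C}}(t)$. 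By the remark after Lemma \ref{lemma:pos_tube_size}, its velocity error stays within $\delta_{ve}$ of the guidance field. Choosing gains so that $\delta_p \leq R$ answers Problem \ref{problem:path_following2}.

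The main obstacle is the step from ``$\Dot{V}_{epv} < 0$ outside $\Delta_p$'' to ``$\|\vect{e}_p\| \leq \delta_p$ for all time''. $V_{epv}$ also contains the parameter errors $\Tilde{\zeta}$, and $\Delta_p$ is a box rather than a sublevel set, so the state can ultimately reach the smallest sublevel set of $V_{epv}$ containing $\Delta_p \times \{\Tilde{\zeta}\}$. That set may allow $\|\vect{e}_p\|$ slightly larger than $\delta_p$. My first attempt is to handle it as follows. The projection keeps $\Hat{\zeta} \in \Xi$, so $\Tilde{\zeta}$ is bounded, and with zero initial error $V_{epv}(0)$ is just the parameter term. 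I would then argue, in the paper's tube-based sense, that $\delta_p$ is the ultimate attractive bound, replacing $\delta_p$ if needed by the sublevel-set radius $\sqrt{2V^*/k_{ep}}$. Here $V^*$ is the maximum of $V_{epv}$ over $\Delta_p \times \{\Tilde{\zeta} : \zeta - \Tilde{\zeta} \in \Xi\}$, which is finite when $\Xi$ is bounded (otherwise restrict to the projection's operating region). The safety margin condition would then be read with this bound. A secondary subtlety is that the saturation in $\Theta$ makes $\Bar{\vect{a}}$ only continuous, so I would invoke existence of Carathéodory solutions for the nominal and closed-loop systems.
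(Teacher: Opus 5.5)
Your route is the same as the paper's. You decompose $\vect{p}=\bvect{p}+\vect{e}_p$, get (i) from Lemma~\ref{lemma:nominal_safety} plus the triangle inequality, and get (ii) from Lemma~\ref{lemma:nominal_path_following} plus the error bound. The paper's proof is only this three-lemma invocation. It asserts a ``maximum deviation of $\delta_p$'', i.e.\ it reads Lemma~\ref{lemma:pos_tube_size} as $\|\vect{e}_p(t)\|\le\delta_p$ for all $t$. That lemma only shows $\Dot{V}_{epv}<0$ outside $\Delta_p$, which is an attractivity statement with $\Tilde{\zeta}$ inside $V_{epv}$. So the step you single out as the main obstacle is a real hole in the paper's argument, not in yours, and you were right not to gloss over it. It cannot be closed with the stated constant. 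Here $\delta_p\propto\sqrt{\epsilon_p}$. But take, say, $\Hat{\zeta}(0)=0$, $F=\tanh$ and a constant disturbance. The components of $\Hat{\zeta}_1\vect{F}(\Hat{\zeta}_1\vect{e}_v/\epsilon_p)$ are bounded by $\Hat{\zeta}_1$ whatever $\epsilon_p$ is. So the overshoot of $\vect{e}_p$ before the adaptive gains reach the disturbance level is bounded below independently of $\epsilon_p$. For small $\epsilon_p$ the all-time bound $\|\vect{e}_p(t)\|\le\delta_p$ is therefore false, not merely unproved.

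Your repair is sound and is the right kind of fix. Assume $\vect{e}_p(0)=\vect{e}_v(0)=0$, $\Hat{\zeta}(0)\in\Xi$ and $\Xi$ bounded. If $V_{epv}>V^*$, then $\Hat{\zeta}\in\Xi$ forces $(\vect{e}_p,\vect{e}_v)\notin\Delta_p$, so $\Dot{V}_{epv}<0$ by \eqref{eq:lyapunov_robust_upper_bound}. Hence $\{V_{epv}\le V^*\}$ is forward invariant, and since it contains the initial state, $\|\vect{e}_p(t)\|\le\sqrt{2V^*/k_{ep}}$ for all $t\ge 0$. But this proves the theorem with $\delta_p$ replaced by $\sqrt{2V^*/k_{ep}}$ in the safety condition. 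The same replacement applies to the tube radius in (ii), unless you separately show the parameter term settles. Note that $V^*$ can only be bounded via the diameter of $\Xi$, since $\zeta$ is unknown. State this weakened result explicitly rather than ``reading the condition with this bound''. Boundedness of $\Xi$ is genuinely used here; your parenthetical fallback does not by itself bound $\Tilde{\zeta}$. Two smaller points. Setting $\bvect{p}(0)=\vect{p}(0)$, $\bvect{v}(0)=\vect{v}(0)$ presupposes that the practical initial state satisfies the barrier initial conditions required in Lemma~\ref{lemma:nominal_safety}. Also, $\mathrm{sat}$ is Lipschitz, so $\Bar{\vect{a}}$ creates no existence issue; the discontinuity that calls for a generalized solution concept is the projection switch in \eqref{eq:pos_zetas_law}.
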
 
\begin{proof}
    Lemma \ref{lemma:nominal_path_following} states that the nominal system \eqref{eq:system_double_integrator_nominal} asymptotically converges to the path when it is safe and Lemma \ref{lemma:nominal_safety} states the nominal system's distance to the obstacle set is kept greater than or equal to $\Bar{\lambda}$.
    
    Furthermore, the error between the practical system and the nominal trajectory is constrained in a small set as a result of Lemma \ref{lemma:pos_tube_size}. This means that the system converges to an attractive set around the path when it is safe and that it keeps a distance around $\Bar{\lambda}$ from the obstacles with a maximum deviation of $\delta_p$.
\end{proof}
Note that one can choose the value of $\lambda$ independently from $\delta_p$, such that it is always possible to make $\Bar{\lambda} > \lambda + \delta_p$ and therefore ensure safety.

\section{Results}

We validate our method through extensive simulations in different scenarios. A detailed visualization with animation videos of the results is available at \href{https://youtu.be/dkCGE_h0_EE}{youtu.be/dkCGE\_h0\_EE}. The simulation implementation is available at \href{https://github.com/ArthurHDN/position_controller}{github.com/ArthurHDN/position\_controller}. We divide the simulation batches in $\mathbb{R}^2$ and $\mathbb{R}^3$ scenarios for easier comparison. 

\subsection{$\mathbb{R}^2$ scenarios}

This set of simulations consisted of two different curves: (i) a static circle for baseline; and (ii) a smooth star curve that expands/contracts and moves its center. For each curve, four different sets of obstacles were used: (i) no obstacles; (ii) static sparse obstacles; (iii) static cluttered obstacles; (iv) moving obstacles. Figure \ref{fig:sim2dbatch_scenarios} shows the combinations of curves and obstacles used. Each row represents a curve and each column an obstacle set, where both the curve on the last row and the obstacle set on the last column were dynamic. 
\begin{figure}[htbp]
   \centering
    \includegraphics[clip,trim={0.0in 0.0in 0.0in 0.0in}, width=0.999\linewidth]{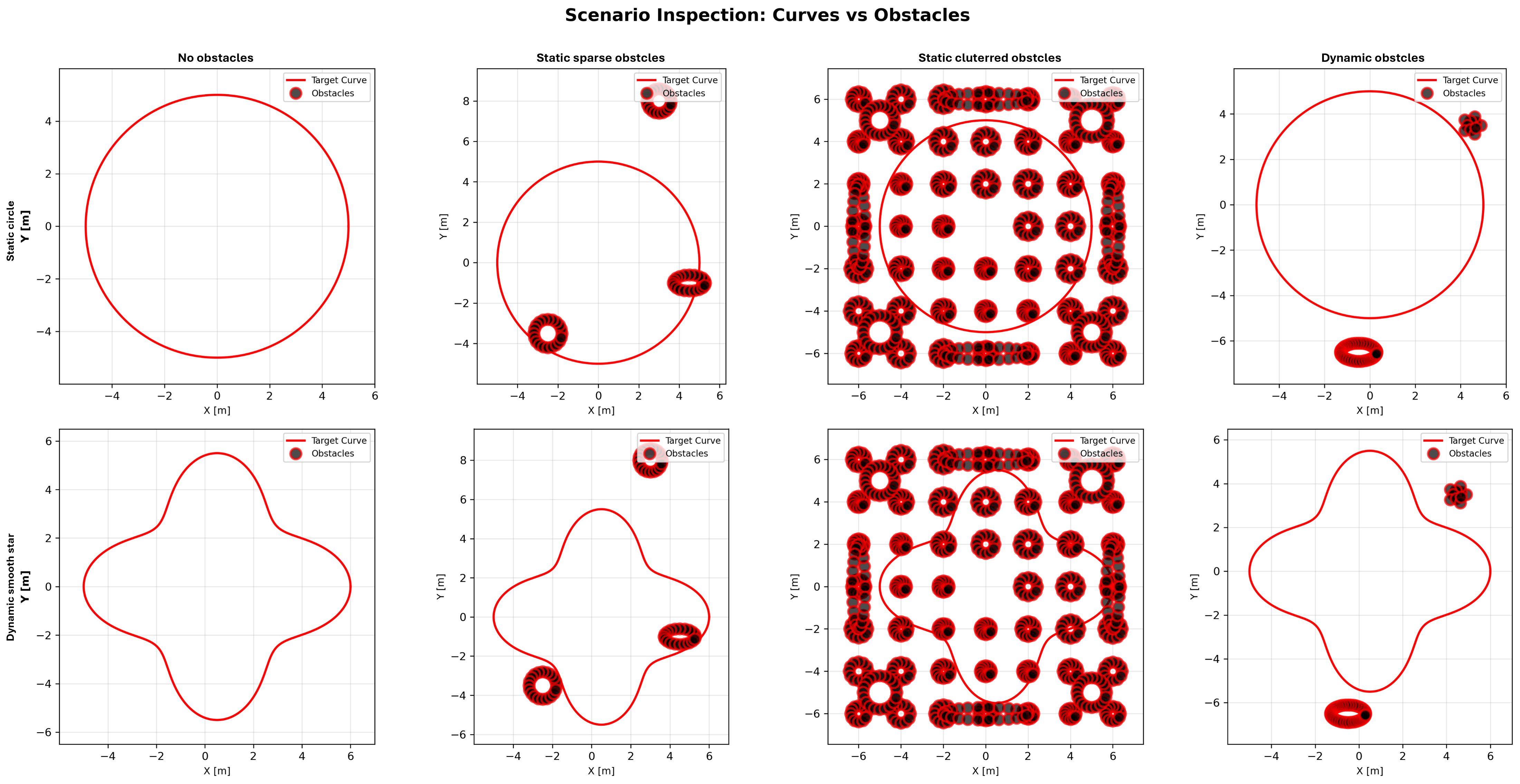}
   \caption{Combinations of curves and obstacles used for the 2D simulations.}
   \label{fig:sim2dbatch_scenarios}
\end{figure}

For each scenario in Fig. \ref{fig:sim2dbatch_scenarios}, three disturbance profiles were used: (i) null disturbances; (ii) moderate disturbances, $\begin{bmatrix}
  \mathcal{U}_{-0.1,0.1} & \mathcal{U}_{-0.1,0.1}   
\end{bmatrix}\trans + \begin{bmatrix}
    0.2\mathrm{s}_5 & 0.1\sin(0.1t)
\end{bmatrix}\trans$ with $\mathcal{U}_{a,b}$ being a uniformly distributed random sample from $a$ to $b$ and $\mathrm{s}_{t_1}$ being the unit step function for $t=t_1$; and (iii) high disturbance $\begin{bmatrix}
  \mathcal{U}_{-1,1} & \mathcal{U}_{-1,1}   
\end{bmatrix}\trans + \begin{bmatrix}
    3 & 1.5\sin(0.5t)
\end{bmatrix}\trans$. Figure \ref{fig:sim2dbatch_disturbances} shows one run for each of the disturbance profiles.
\begin{figure}[htbp]
   \centering
    \includegraphics[clip,trim={0.0in 0.0in 0.0in 0.0in}, width=0.99\linewidth]{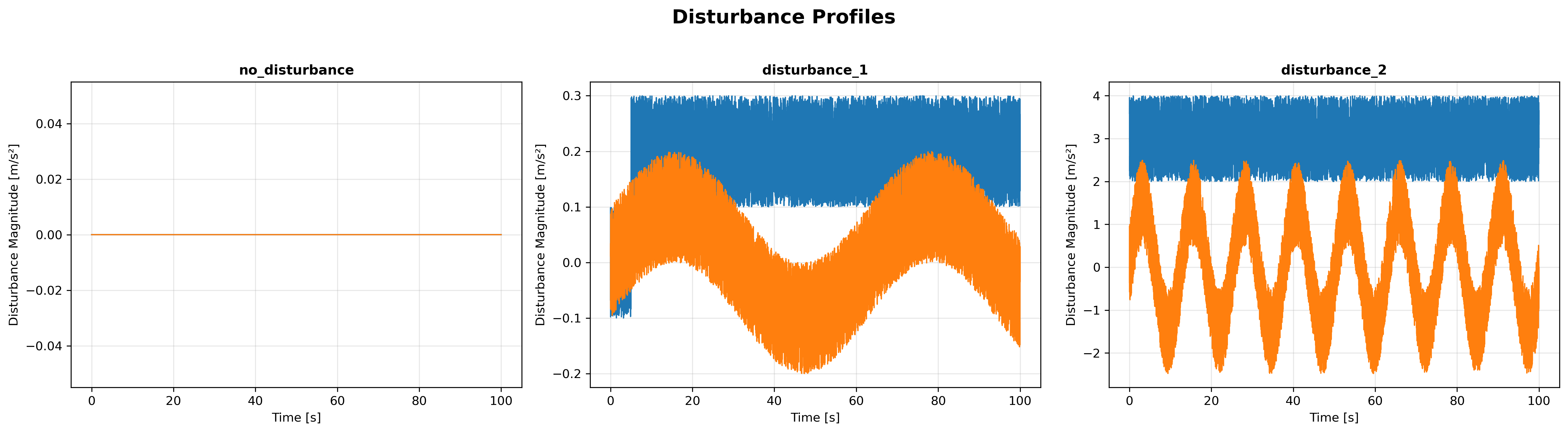}
   \caption{Disturbances profiles applied to the simulations.}
   \label{fig:sim2dbatch_disturbances}
\end{figure}

Finally, the nominal controller \eqref{eq:a_control_law_nominal} alone and the final controller \eqref{eq:f_complete} were applied in each curve-obstacle-disturbance scenario 50 times with random initial positions. This resulted in 2 curves, 4 obstacle profiles, 3 disturbance profiles, 2 controllers, and 50 runs, totalizing 2400 simulations. 

Figure \ref{fig:sim2dbatch_boxplot} shows boxplots of the results. In the top, the percentage of time where the system was inside the tube around the curve, \emph{i.e.} $\|\vect{D}_{\vect{\mathcal{C}}}(\vect{p},t)\| \leq \delta_p$. In the bottom, the percentage of the time where the safety constraint was violated, \emph{i.e.} $B^\lambda(\vect{p},t) < 0$. For both graphs, it separates the results for each obstacle profile in the $x$-axis and separates for each controller using different colors, red being the nominal controller only and blue the robust controller. 
\begin{figure}[htbp]
   \centering
    \includegraphics[clip,trim={0.0in 0.0in 0.0in 0.0in}, width=0.99\linewidth]{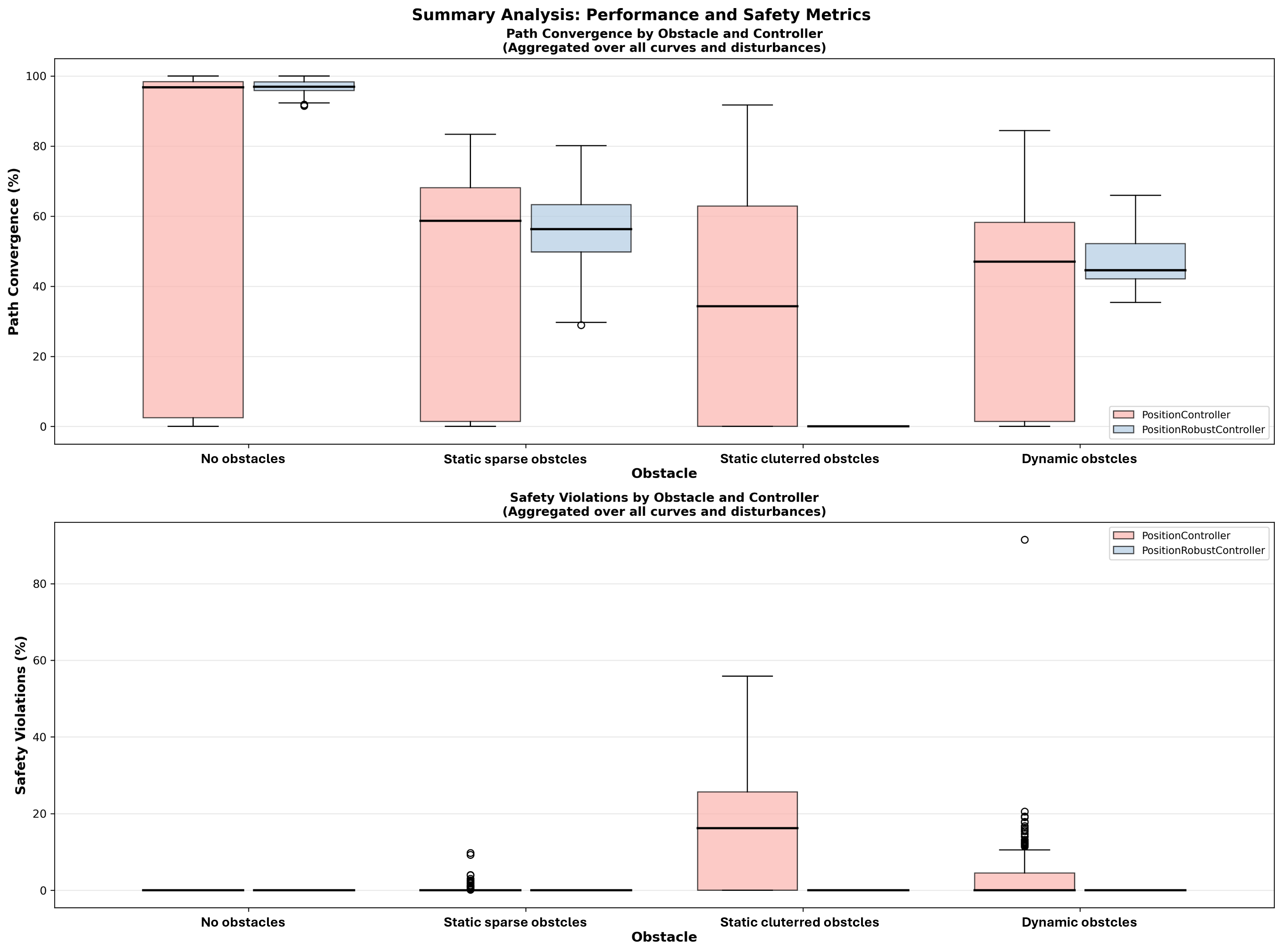}
    \caption{Boxplot of the simulations results showing the percentage of time where the system was inside the tube around the curve (top) and the percentage of the time where the safety constraint was violated (bottom). Results are separated for each obstacle profile and for each controller.}
   \label{fig:sim2dbatch_boxplot}
\end{figure}

Overall, the system behaved as expected for both static and dynamic cases. For the obstacle-free scenario, the mean of the path convergence time is nearly 100\%. In this case, the variation for the robust controller is small. This is notably different from the nominal controller, because in the high disturbed cases it could not stabilize the system and was dominated by the disturbance, which is also observed in the other obstacle cases. 

Notably, for the other obstacle profiles the system could not always follow the path, being forced to take detours in order to avoid collisions. The detours taken by the robust controller were higher than the nominal one, resulting in marginally less path convergence. Although possessing a marginally higher path convergence, the nominal controller did not respect the safety margin. This happened when the system was disturbed, what also lead to collisions on some simulations. The robust controller on the other hand always respected the safety margin. Therefore, the problems stated in this work were solved by the proposed controller.

\subsection{$\mathbb{R}^3$ scenarios}

This section shows another set of simulations following the same structure as previous, but this time for 3D problems. Figure \ref{fig:sim3dbatch_scenarios} shows the combinations of curves and obstacles used. Each row represents a curve, where the first is a static circle and the second is a moving trefoil knot. Each column represents an obstacle set, where the first set is no obstacle and the second is a moving cube. 
\begin{figure}[htbp]
   \centering
    \includegraphics[clip,trim={0.0in 0.0in 0.0in 0.0in}, width=0.99\linewidth]{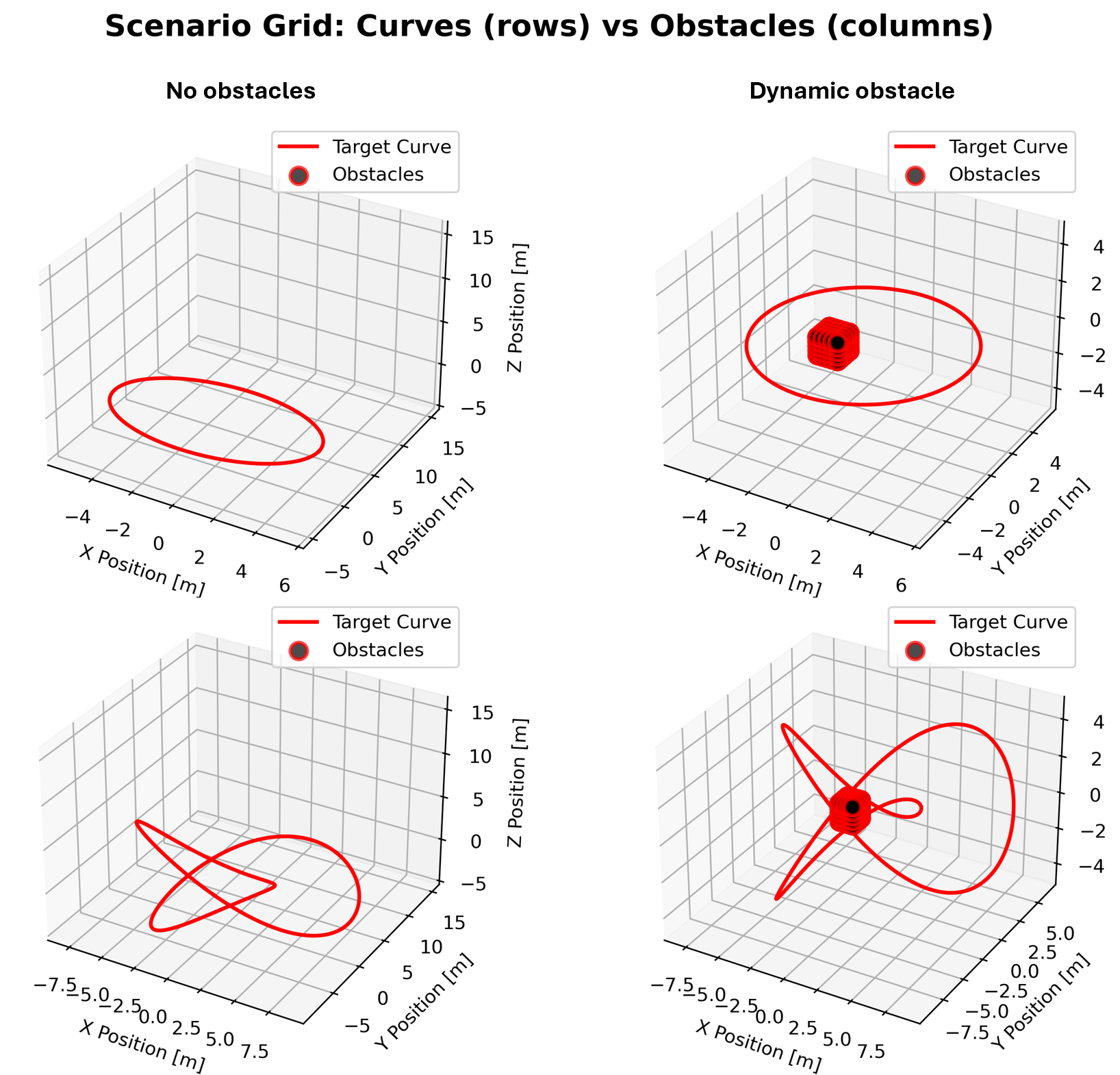}
   \caption{Combinations of curves and obstacles used for the 3D simulations.}
   \label{fig:sim3dbatch_scenarios}
\end{figure}

For each scenario in Fig. \ref{fig:sim3dbatch_scenarios}, two disturbance profiles were used: (i) null disturbances; and (ii) moderate disturbances, $$\begin{bmatrix}
  0.4\cos(0.1t) \\ 0.4\sin(0.1t) \\ 0.2\sin(0.15t)   
\end{bmatrix} + \begin{bmatrix}
    0.3\sin(2t)\cos(0.5t) \\ 0.3\cos(2.2t)\sin(0.6t) \\ 0.25\sin(1.8t)\cos(0.55t)
\end{bmatrix} + \begin{bmatrix}
    \mathcal{U}_{-0.15,0.15} \\ \mathcal{U}_{-0.15,0.15} \\ \mathcal{U}_{-0.15,0.15}
\end{bmatrix}.$$
Figure \ref{fig:sim3dbatch_disturbances} shows one run for each of the disturbance profiles.
\begin{figure}[htbp]
   \centering
    \includegraphics[clip,trim={0.0in 0.0in 0.0in 0.0in}, width=0.99\linewidth]{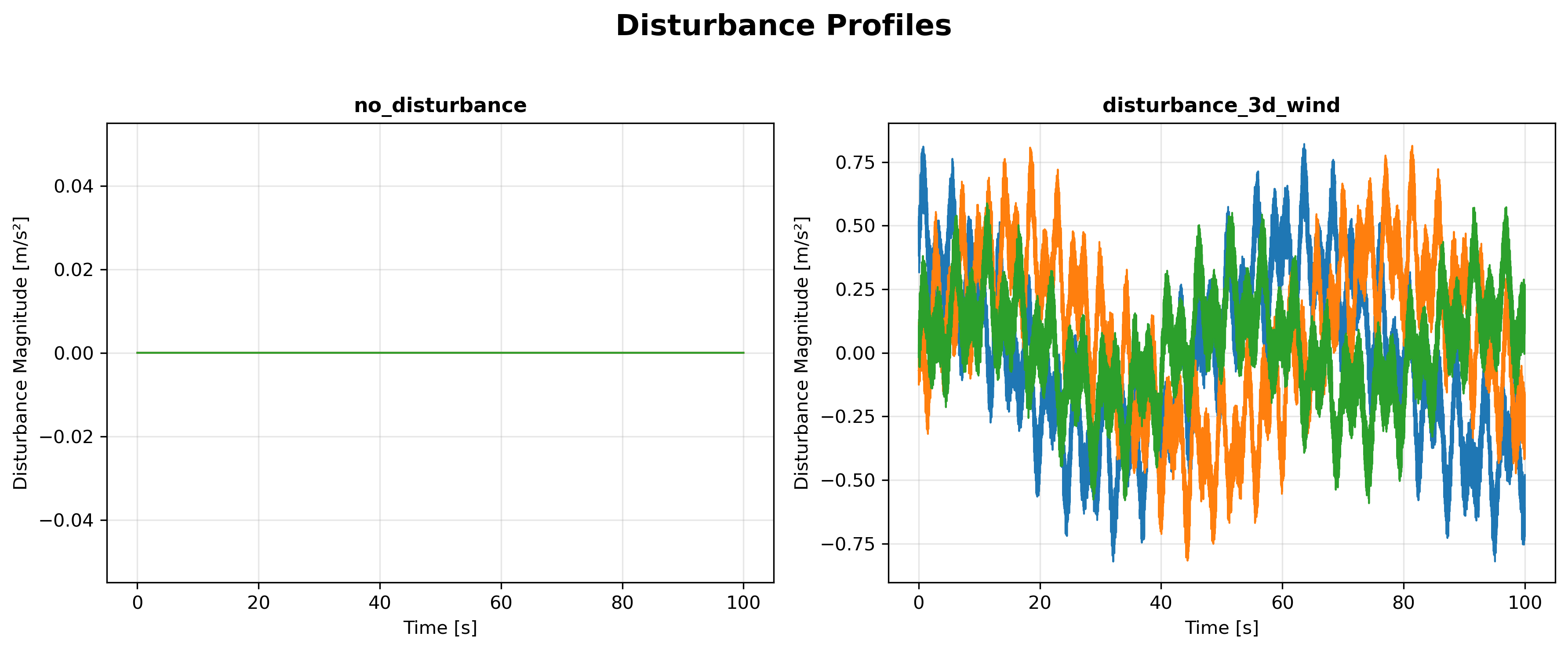}
   \caption{Disturbances profiles applied to the simulations.}
   \label{fig:sim3dbatch_disturbances}
\end{figure}

Overall the results mirror the previous ones, drawing the same conclusions. In the $\mathbb{R}^3$ scenarios, there were no safety violations. This is due the scenarios being mostly free with only one dynamic obstacle. Figure \ref{fig:sim3dbatch_boxplot} shows the boxplot analysis. 
\begin{figure}[htbp]
   \centering
    \includegraphics[clip,trim={0.0in 0.0in 0.0in 0.0in}, width=0.99\linewidth]{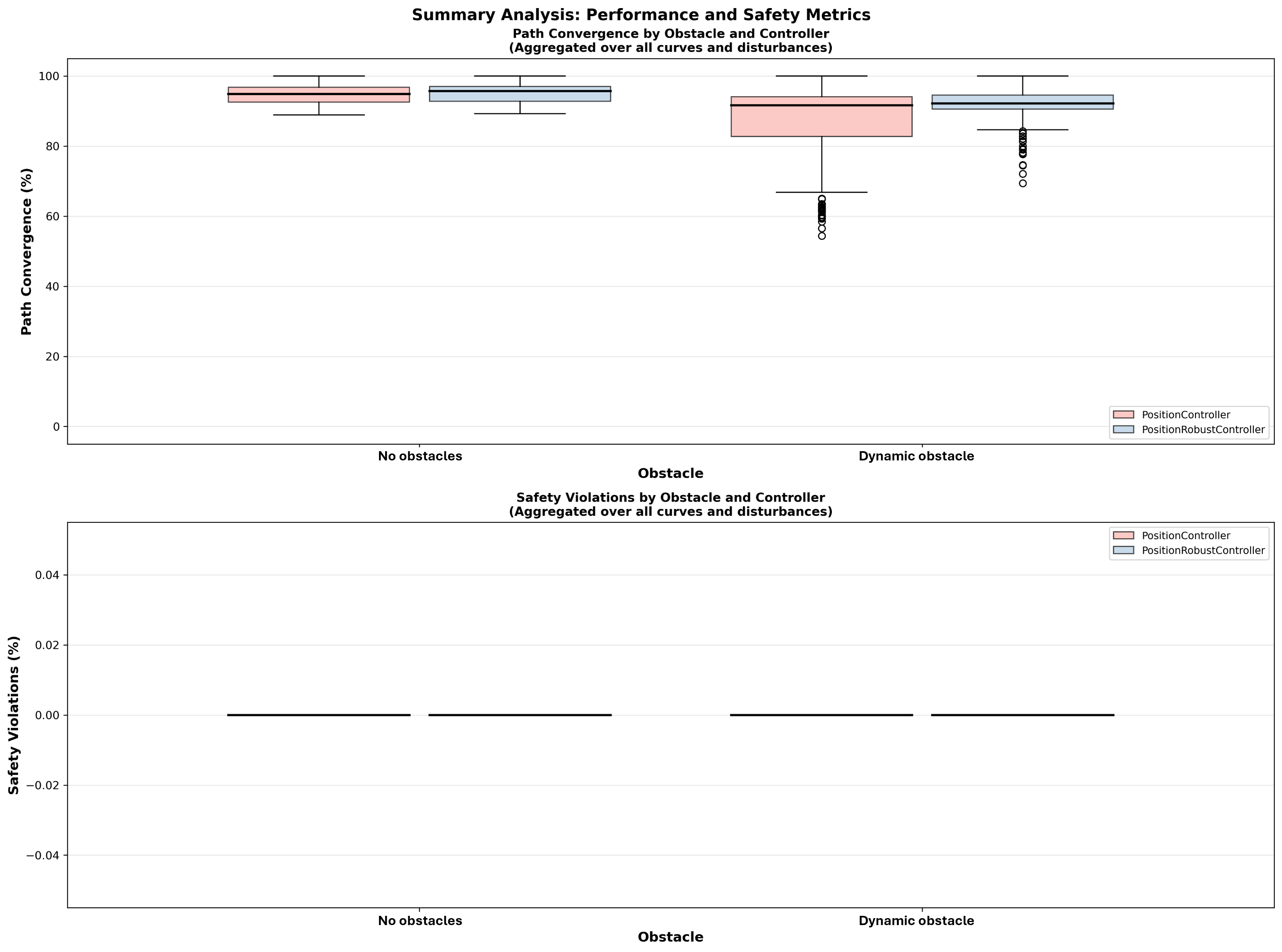}
    \caption{Boxplot of the simulations results showing the percentage of time where the system was inside the tube around the curve (top) and the percentage of the time where the safety constraint was violated (bottom). Results are separated for each obstacle profile and for each controller.}
   \label{fig:sim3dbatch_boxplot}
\end{figure}

\newpage
\section{Conclusion}

This work presented a framework for safe and robust tube-based path-following in robot navigation. The proposed method combines a nominal control with an adaptive robust control law. The nominal control is built using Artificial Vector Field guidance and Backstepping control with reactive smoothed obstacle avoidance. The robust control mitigates the effects of unknown-but-bounded disturbances in a tube around the nominal trajectory. Formal guarantees were established for safety, path convergence, and robustness.



\bibliographystyle{ieeetr} 
\bibliography{bibliography}

\end{document}